\newtheorem{remark}{Remark}
\newtheorem{lemma}{Lemma}
\DeclarePairedDelimiter{\norm}{\lVert}{\rVert}
\let\OLDthebibliography\thebibliography
\renewcommand\thebibliography[1]{
	\OLDthebibliography{#1}
	\setlength{\parskip}{-.1pt}
	\setlength{\itemsep}{-.5pt}
}
\title{Energy Efficiency in Rate-Splitting Multiple Access with Mixed Criticality}
\author{\IEEEauthorblockN{Robert-Jeron Reifert, Stefan Roth, Alaa Alameer Ahmad, and Aydin Sezgin}
\IEEEauthorblockA{Institute of Digital Communication Systems, Ruhr University Bochum, Bochum, Germany\\
Email: \{robert-.reifert,stefan.roth-k21,alaa.alameerahmad,aydin.sezgin\}@rub.de}
\thanks{This work was funded in part by the Federal Ministry of Education and Research (BMBF) of the Federal Republic of Germany (F\"orderkennzeichen 01IS18063A, ReMiX), and in part by the Deutsche Forschungsgemeinschaft (DFG, German Research Foundation) under Germany's Excellence Strategy - EXC 2092 CASA - 390781972.\newline
This work has been submitted to the IEEE for possible publication. Copyright may be transferred without notice, after which
this version may no longer be accessible.}
}
\date{December 2021}
\begin{document}

\maketitle\vspace*{-.6cm}
\begin{abstract}
	Future sixth generation (6G) wireless communication networks face the need to similarly meet unprecedented quality of service (QoS) demands while also providing a larger energy efficiency (EE) to minimize their carbon footprint. Moreover, due to the diverseness of network participants, mixed criticality QoS levels are assigned to the users of such networks. 
	%
	%
	%
	In this work, with a focus on a cloud-radio access network (C-RAN), the fulfillment of desired QoS and minimized transmit power use is optimized jointly within a rate-splitting paradigm.
	%
	%
	%
	%
	Thereby, the optimization problem is non-convex. Hence, a low-complexity algorithm is proposed based on fractional programming.
	%
	%
	Numerical results validate that there is a trade-off between the QoS fulfillment and power minimization. Moreover, the energy efficiency of the proposed rate-splitting algorithm is larger than in comparative schemes, especially with mixed criticality.
	%
\end{abstract}
	\vspace*{-.15cm}
\section{Introduction}\vspace*{-.2cm}
The road towards the sixth generation (6G) of wireless communication networks is already being pursued by researchers around the globe. Through a wide field of applications, the empowerment of anytime anywhere access, and an overwhelming amount of connected devices, 6G brings enormous challenges towards the development of future network technologies. To ensure carbon neutrality, 6G networks are expected to be green whilst also fulfilling quality of service (QoS) requirements in an energy efficient manner \cite{8922617}. \\
\indent To ensure energy efficiency (EE) under fulfilling the QoS demands, we investigate QoS target capabilities within a cloud-radio access network (C-RAN), in which various users are connected to multiple base stations (BSs), which are jointly controlled by a central processor (CP) at the cloud, as drawn in \figurename~\ref{sys_mdl}. A C-RAN is a promising network architecture, which enables centralization and virtualization providing high elasticity, high QoS, and good EE \cite{7378422}. Thereby, the QoS assigned to the users are designed to match the desired data rates of the users (target rates), which themselves often depend on the subscribed contract. Hence, we aim at designing the C-RAN to enable \emph{mixed criticality} regarding different communication links. In industrial context, mixed criticality corresponds to different importance of network participants, e.g., a security monitoring system is more critical than a maintenance scheduler. Since the introduction of mixed criticality in 2007 \cite{sota17} for task scheduling in real-time systems, many research works discussed, reviewed, and analyzed mixed criticality in communications, e.g., works \cite{sota18,sota21}.\\
\indent For the communication between the the BSs and the users, we are employing \textit{rate-splitting}. Originating in the early 80's \cite{1056307}, and shown to achieve within one-bit of the interference channel capacity \cite{4675741}, rate-splitting mutliple access (RSMA) achieved significant attention in research, e.g., \cite{5910112,8846706,mao2022ratesplitting}. Contrary to only considering single message streams to each user (\emph{private message}), under RSMA, a \emph{common message} is utilized for two reasons: $(1)$ The common message of a user is used analog to the private message, to transmit additional data to the user. $(2)$ For the purpose of interference mitigation, common message decoding at other users is utilized. 
That is, a user may decode a subset of all common messages, including its own, in a successive manner, in order to reduce the interference level when decoding its own private message. \\
    \setlength{\textfloatsep}{0pt}%
	\begin{figure}
		\centering
		\includegraphics[scale=0.82]{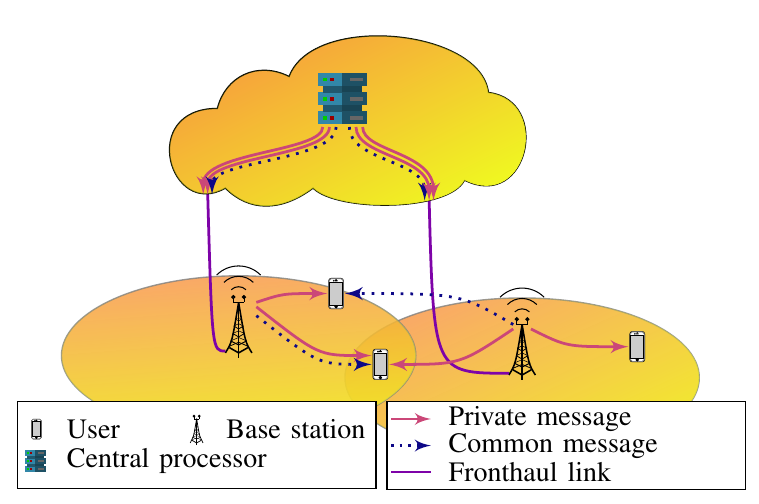}
		\vspace*{-.3cm}
		\caption{System model of a C-RAN consisting of $2$ BSs and $3$ users, where private and common messages are transmitted.}
		\label{sys_mdl}
		\vspace*{-.05cm}
	\end{figure}%
\indent There are recent related works considering the EE of the RSMA paradigm \cite{8846706,8491100,9650662,9145076}, see also references therein. Work \cite{8846706} considers layered-based RSMA techniques under weighted sum rate and EE objective subject to QoS constraints. Simulations validated the enhancement in spectral and energy efficiency under non-orthogonal unicast and multicast transmission. Similarly, \cite{8491100} and \cite{9650662} asess the EE of RSMA compared to space-division and non-orthogonal multiple access, denoted by SDMA and NOMA, respectively. These works, however, do not reside in the C-RAN domain, nor do they consider QoS demands. Especially, \cite{9650662} studies the trade-off beteen spectral and energy efficiency as two conflicting objectives. In \cite{9145076}, the EE of RSMA was considered under C-RAN architecture comparing numerically efficient and global optimal approaches.

Moreover, \cite{9145363} investigates power minimization under a QoS constraint. Therein, the authors investigate the minimization of weighted-sum of transmit powers subject to per-user QoS constraints. Hence, such scheme is only working in networks, where the QoS is achievable, same goes for \cite{8846706}. However, this assumption is rather optimistic and we herein propose a more generalized scheme.

In this work, we consider the joint minimization of transmit power and mean squared error (MSE) of QoS deviation, i.e., the gap of allocated and desired rate. Thereby, we utilize a mixed critical C-RAN under the RSMA paradigm in order to achieve good EE whilst fulfilling QoS demands. As such, we jointly optimize the precoding vectors and allocated rates subject to per-BS fronthaul capacity, maximum transmit power, and per-user achievable rate constraints.
\section{System Model and Optimization Metric}\label{sec:sysmod}
    \begin{table}[t]\vspace*{.15cm}
	\caption{List of Network Parameters}\vspace*{-.3cm}
	\label{tb:notations}
	\centering
	\begin{tabular}{|c|l|}
		\hline
		Notation & Definition\\
		\hline
		\hline
		$\mathcal{B}$ & Set of BSs\\
		$\mathcal{K}$ & Set of single-antenna users\\
		${L}$ & Number of antennas per BS\\
		${C}_b^{\text{max}}$ & Fronthaul capacity per BS\\
		${P}_b^{\text{max}}$ & Maximum transmit power per BS\\
		$\tau$ & Transmission bandwidth in MHz\\
		\hline
	\end{tabular}
\end{table}%
	The network considered is a downlink C-RAN utilizing \emph{data-sharing} transmission strategy. Under such architecture, a cloud coordinates ${B}$ BSs via fronthaul links in order to serve ${K}$ users, where the CP at the cloud performs most baseband processing tasks. That is, the CP encodes messages into signals and designs the joint precoding vectors, which are then forwarded to the BSs to perform radio transmission. For easy access, Table \ref{tb:notations} lists all relevant system and network parameters. We assume the cloud to have access to the full channel state information (CSIT), which is reasoned in the assumption of a \emph{block-based transmission model}. A transmission block is made of a couple of time slots in which the channel state remains constant, thus the CSIT needs to be acquired at the beginning of each block. The proposed algorithm optimizes the resource allocation within one such block. 
	
	In this system, messages are coded via the RSMA framework. The requested content of user $k$ will be split into a private and common message. Thereby, the CP thus encodes the messages into $s_k^p$ and $s_k^c$, the private and common signal to be transmitted to user $k$. These signals are zero mean, unit variance complex Gaussian variables with the property of being independent identically distributed and circularly symmetric. While $s_k^p$ is intended to be decoded by user $k$ only, RSMA employs $s_k^c$ signals to be decoded at multiple users for the purpose of interference mitigation. This necessitates the utilization of a successive decoding strategy at the users.
	
	Now, let $\bm{h}_{b,k}\in \mathbb{C}^{L\times 1}$ be the channel vector linking user $k$ and BS $b$. Thus, the aggregate channel vector of $k$ is $\bm{h}_{k}=[(\bm{h}_{1,k})^T,\ldots,(\bm{h}_{B,k})^T]^T\in\mathbb{C}^{L B\times 1}$. The aggregate precoding vectors for transmitting $k$'s signals are $\bm{w}_{k}^o=[(\bm{w}_{1,k}^o)^T,\ldots,(\bm{w}_{B,k}^o)^T]^T\in\mathbb{C}^{L B\times 1}$, which consist of the individual precoding vectors $\bm{w}_{b,k}^o\in\mathbb{C}^{L\times 1}$, where $o\in\{p,c\}$ denotes private and common vectors, respectively. Throughout this work, the index $o$ denotes the differentiation of private and common signal related variables. Due to limited radio resources, BSs naturally have limited capabilities regarding the number of served users. Hence, we introduce the sets $\mathcal{K}_b^p$ and $\mathcal{K}_b^c$, which include only the users whose private or common signal is served by BS $b$. Hence, the previously defined precoding vectors often contain zeros, i.e., $\bm{w}_{b,k}^o = \bm{0}_L$ when $k\notin\mathcal{K}_b^o$. Note that in this work we assume the clustering to be fixed by \cite[Algorithm 1]{9445019}. For cases, where an additional optimization of the clustering is needed, we refer to \cite{9217249}.\\
	\indent Each message stream transmits the data via a specific rate $r_k^o$, while the total rate assigned to user $k$ is $r_k = r_k^p + r_k^c$. To ensure operation of the considered network, the CP has to respect the finite fronthaul capacity of the CP-BS links with
	\begin{equation}\label{eq:fthl}
	    \sideset{}{_{k\in\mathcal{K}_b^p}}\sum r_k^p + \sideset{}{_{k\in\mathcal{K}_b^c}}\sum r_k^c \leq {C}_b^\mathrm{max}.
	\end{equation}
	In what follows, we explain the construction of the transmit signal and the successive decoding scheme for the common streams.
	
	\subsection{Transmit Signal and Successive Decoding}
	Each BS constructs its transmit signal $\bm{x}_b \in\mathbb{C}^{L\times 1}$ according to the precoder coefficients and the user signals by calculating
	\begin{align}
	    \bm{x}_b = \sideset{}{_{k\in\mathcal{K}_b^p}}\sum \bm{w}_{b,k}^p s_k^p + \sideset{}{_{k\in\mathcal{K}_b^c}}\sum \bm{w}_{b,k}^c s_k^c.\label{eq:x_n_def}
	\end{align}
	Thereby, the signal transmitted by each BS is subject to the maximum transmit power constraint $\mathbb{E}\{ \bm{x}_b^H \bm{x}_b \}$ as
	\begin{equation}\label{eq:pwr}
	    \sideset{}{_{k\in\mathcal{K}_b^p}}\sum \norm[\big]{\bm{w}_{b,k}^p}_2^2 + \sideset{}{_{k\in\mathcal{K}_b^c}}\sum \norm[\big]{\bm{w}_{b,k}^c}_2^2 \leq P_b^\mathrm{max}.
	\end{equation}
	
	
	In the RSMA framework, multiple users may decode each common message to reduce the interference for messages decoded afterwards. It is thus relevant to consider additional definitions of the network, which are provided as follows:
	
	\begin{itemize}
	    \item The set of users, which decode user $k$'s common message, is $\mathcal{M}_k = \{j\in\mathcal{K}| \text{user } j \text{ decodes } s_k^c\}$.
	    \item The users, whose common messages are decoded by user $k$, are denoted in the set $\mathcal{I}_k = \{j\in\mathcal{K}| k\in \mathcal{M}_j \}$.
	    \item The decoding order at user $k$ is written as $\pi_k$, where $\pi_k(m) > \pi_k(i)$ means that user $k$ decodes common message $i$ before message $m$.
	    \item The set of users, whose common messages are decoded after decoding user $i$'s message at user $k$, become $\mathcal{I}'_{i,k} = \{ m\in\mathcal{I}_k | \pi_k(m) > \pi_k(i) \}$.
	\end{itemize}
	A suitable method of calculating $\mathcal{M}_k$, $\mathcal{I}_k$, $\mathcal{I}'_{i,k}$, and $\pi_k$ is provided by \cite{9445019}.
	
	Taking those definitions into account, the received signal at user $k$ can be formulated as
	\begin{align}\label{eq:y_k}
	    &y_k = \bm{h}_{k}^H\bm{w}_{k}^p {s}_{k}^p + \sum_{j \in \mathcal{I}_{k}}\bm{h}_{k}^H\bm{w}_{j}^c {s}_{j}^c + \sum_{m \in \mathcal{K}\setminus \{k\}}\bm{h}_{k}^H\bm{w}_{m}^p {s}_{m}^p \nonumber\\
	    &\qquad\qquad + \sum_{l \in \mathcal{K}\setminus \mathcal{I}_k}\bm{h}_{k}^H\bm{w}_{l}^c {s}_{l}^c+ n_k .
	\end{align}
	Here, $n_k \sim \mathcal{C}\mathcal{N}(0,\sigma^2)$ represents additive white Gaussian noise, assumed to have the same power for all users. Hence, \eqref{eq:y_k} contains two terms of signals to be decoded during the successive decoding, namely the first two terms consisting of private and common signals. The last three terms in \eqref{eq:y_k} denote interference from private signals, common signals, and noise, respectively. Using this definition, the signal to interference plus noise ratios (SINRs) of the messages decoded are \vspace*{-0.2cm}
	\begin{subequations}
	\begingroup
	\addtolength{\jot}{-.1cm}
    \begin{align}
		\label{eq:e2.18}
		\Gamma_{k}^p &= \frac{\left|\bm{h}_{k}^{H}\bm{w}_{k}^p \right|^2}{\sum\limits_{j \in \mathcal{K}\setminus \{k\}}\left|\bm{h}_{k}^{H}\bm{w}_{j}^p \right|^2 + \sum\limits_{l \in \mathcal{K}\setminus \mathcal{I}_k}\left|\bm{h}_{k}^{H}\bm{w}_{l}^c \right|^2+\sigma^2},\\
		\label{eq:e2.19}
		\Gamma_{i,k}^c &=\frac{\left|\bm{h}_{k}^{H}\bm{w}_{i}^c \right|^2}{\sum\limits_{j \in \mathcal{K}}\left|\bm{h}_{k}^{H}\bm{w}_{j}^p \right|^2 + \sum\limits_{l \in \mathcal{K}\setminus \mathcal{I}_k\cup \mathcal{I}'_{k,i}}\left|\bm{h}_{k}^{H}\bm{w}_{l}^c \right|^2+\sigma^2 }.
	\end{align}\endgroup
	\end{subequations}
	\\ \vspace*{-.8cm} \\
	Here, $\Gamma_{k}^p$ is the SINR of the private message and $\Gamma_{i,k}^c$ the SINR of user $i$'s common message, both decoded at user $k$.
	\subsection{Considered MSE and EE Metric}
	In this work, we are analyzing the merits of two system properties, i.e., we want to minimize the gap of desired and allocated rate (i.e., MSE of QoS deviation) together with the energy consumption. Therefore, we optimize a metric covering a weighted sum of both targets, i.e.,\vspace*{-0.2cm}
	\begingroup
	\addtolength{\jot}{-.15cm}
	\begin{align}\label{eq:f}
	    &\Psi = \alpha\frac{1}{|\mathcal{K}|}\sum_{k\in\mathcal{K}}\left|\left(r_k^p + r_k^c\right) - r^{\mathrm{des}}_k\right|^2 \nonumber\\ 
	    &\qquad\qquad+ (1-\alpha)\sum_{k\in\mathcal{K}} \left(\norm[\big]{\bm{w}_k^p}_2^2 + \norm[\big]{\bm{w}_k^c}_2^2\right).
	\end{align}\endgroup
	\\ \vspace*{-.8cm}\\
	Thereby, the mixed criticality of links is represented in $r^{\mathrm{des}}_k$, where critical applications have greater QoS requirements than others. Parameter $\alpha\in[0,1]$ denotes a weighting factor to achieve custom trade-offs between rate gap and power minimization.
	Note that the transmit power covered by \eqref{eq:f} only represents parts of the energy consumption of the network. 
	A more general power metric would consider operating power, fronthaul link power, and processing power. However, as we do not model or optimize power related variables other than precoding vectors, such model is reasonable.
	\begin{remark}
	Minimizing a function based on the rate gap and power usage as in \eqref{eq:f} contributes to finding a good trade-off between minimizing the power usage and maximizing the data rates. On the one hand, systems which can hardly fulfill the desired QoS fall into a mode of maximizing each user's rate upon meeting the demands. On the other hand, as rate targets are met, no further enhancement of the rates is necessary, prioritizing the problem of minimizing transmit power.
	\end{remark}\vspace*{-0.2cm}
	
	\section{Problem Formulation and Algorithm}\label{sec:opt}\vspace*{-0.1cm}
	The problem under consideration is a joint minimization of the network-wide rate gap and the transmit power consumption, which is formulated as follows:\vspace*{-0.2cm}
	\begin{subequations}\label{eq:Opt1}
	\begingroup
	\addtolength{\jot}{-.1cm}
		\begin{align}
			\vspace*{-1cm}\underset{\bm{w},\bm{r}}{\mathrm{min}}\quad &\Psi  \tag{\ref{eq:Opt1}} \\
			\mathrm{s.t.} \quad &\eqref{eq:fthl},\ \eqref{eq:pwr}, \nonumber\\
			& r_{k}^{p} \leq \tau\,\log_2(1+\Gamma_{k}^p),  &\forall k &\in \mathcal{K}, \label{eq:achp}\\	
			& r_{i}^{c} \leq \tau\,\log_2(1+\Gamma_{i,k}^c), &\forall i\in\mathcal{I}_k, \forall k &\in \mathcal{K}. \label{eq:achc}
		\end{align}\endgroup
	\end{subequations} 
	Problem \eqref{eq:Opt1} minimizes the MSE of assigned (private and common) rate to the desired rate and the transmit power by jointly managing the allocated rates and precoding vectors. Hereby, the precoding and rate vector \vspace*{-0.15cm}
	\begin{align*}
	    \bm{w}&=[(\bm{w}_{1}^p)^T,\ldots,(\bm{w}_{K}^p)^T,(\bm{w}_{1}^c)^T,\ldots,(\bm{w}_{K}^c)^T]^T, \\
	    \bm{r}&=[r_{1}^p,\ldots,r_{K}^p,r_{1}^c,\ldots,r_{K}^c]^T,
	\end{align*}
	\\\vspace*{-0.95cm}\\
	denote the optimization variables, respectively. The feasible set of problem \eqref{eq:Opt1} is defined by the fronthaul capacity \eqref{eq:fthl}, the maximum transmit power \eqref{eq:pwr}, and the achievable rates \eqref{eq:achp} and \eqref{eq:achc}. Due to the dependence on the SINR variables in \eqref{eq:e2.18} and \eqref{eq:e2.19}, the constraints \eqref{eq:achp} and \eqref{eq:achc} are non-conxex.	However, an objective function such as \eqref{eq:Opt1} is already in convex form, since the precoder norms and the MSE are both convex. 
	
	A few notes on the feasible set of problem \eqref{eq:Opt1}. The constraints \eqref{eq:fthl} and \eqref{eq:pwr} are highly dependent on the clustering sets $\mathcal{K}_b^p$ and $\mathcal{K}_b^c$, which are herein assumed to be fixed a priori. Constraint \eqref{eq:achc} takes the form of a multiple access constraint. That is, $r_i^c$ is the rate of user $i$'s common stream and the subset of users $\mathcal{M}_i$ are going to decode that specific signal. Thereby, the lowest $\Gamma_{i,k}^c$ determines $r_i^c$, i.e., the lowest SINR of all decoding users bounds the rate. Next, we apply a technique for convexifying the non-convex constraints \eqref{eq:achp} and \eqref{eq:achc}, which is based on fractional programming.
	
	\subsection{Quadratic Transform-based Solution}
	The non-convexity of problem \eqref{eq:Opt1} stems from the constraints \eqref{eq:achp} and \eqref{eq:achc}, as these have a complex fractional form. By introducing the auxiliary variables $\gamma_k^p$ and $\gamma_{i,k}^c$ for the SINR, we transform \eqref{eq:Opt1} into the optimization problem presented in Lemma~\ref{lma_1}. \vspace*{-0.15cm}
	\begin{lemma}\label{lma_1}
	    The optimization problem \eqref{eq:Opt1} can be rewritten as\vspace*{-0.15cm}
    	\begin{subequations}\label{eq:Opt2}
        \begingroup
        \addtolength{\jot}{-.1cm}
    		\begin{align}
    			\underset{\bm{w},\bm{r},\bm{\gamma}}{\mathrm{min}}\quad &\Psi  \tag{\ref{eq:Opt2}} \\
    			\mathrm{s.t.} \quad &\eqref{eq:fthl},\ \eqref{eq:pwr}, \nonumber\\
	    & r_{k}^{p} \leq \tau\,\log_2(1+\gamma_{k}^p),  &\forall k &\in \mathcal{K}, \label{eq:achp2}\\	
		& r_{i}^{c} \leq \tau\,\log_2(1+\gamma_{i,k}^c), &\forall i\in\mathcal{I}_k, \forall k &\in \mathcal{K}, \label{eq:achc2}\\
	    & \gamma_{k}^p \leq \Gamma_{k}^p,  &\forall k &\in \mathcal{K}, \label{eq:sinrp}\\	
		& \gamma_{i,k}^c \leq \Gamma_{i,k}^c, &\forall i\in\mathcal{I}_k, \forall k &\in \mathcal{K}. \label{eq:sinrc}
    		\end{align}\endgroup
    	\end{subequations} 
    	\\ \vspace*{-1cm} \\
	    Thereby, the introduced optimization variable $\bm{\gamma}$ covers all possible non-zero elements of $\bm{\gamma}'=[{\gamma}_{1}^p,\ldots,{\gamma}_{K}^p,{\gamma}_{1,1}^c,{\gamma}_{1,2}^c,\ldots,{\gamma}_{K,K}^c]^T$. Note that ${\gamma}^c_{i,k} = 0$ $\forall i\notin\mathcal{I}_k, k \in \mathcal{K}$ as only a part of all users decode common message $i$.
	    With $(\bm{w}^\star,\bm{r}^\star,\bm{\gamma}^\star)$ being a stationary solution to problem \eqref{eq:Opt2}, $(\bm{w}^\star,\bm{r}^\star)$ is a stationary solution of \eqref{eq:Opt1}.
	\end{lemma}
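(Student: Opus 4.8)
The plan is to prove the two assertions of Lemma~\ref{lma_1} in sequence: first, that \eqref{eq:Opt2} is a faithful reformulation of \eqref{eq:Opt1}, i.e., the two share the same optimal value and the same set of optimal $(\bm w,\bm r)$; second, that the KKT (stationarity) system of \eqref{eq:Opt2} collapses onto that of \eqref{eq:Opt1} once the auxiliary variables $\bm\gamma$ and their multipliers are eliminated.

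For the reformulation, I would show that the projection onto $(\bm w,\bm r)$ of the feasible set of \eqref{eq:Opt2} equals the feasible set of \eqref{eq:Opt1}. The constraints \eqref{eq:fthl}, \eqref{eq:pwr} and the objective $\Psi$ are untouched and independent of $\bm\gamma$, so only \eqref{eq:achp2}--\eqref{eq:sinrc} versus \eqref{eq:achp}--\eqref{eq:achc} need attention. Given $(\bm w,\bm r)$ feasible for \eqref{eq:Opt1}, the choice $\gamma_k^p=\Gamma_k^p$, $\gamma_{i,k}^c=\Gamma_{i,k}^c$ satisfies \eqref{eq:sinrp}--\eqref{eq:sinrc} with equality and reduces \eqref{eq:achp2}--\eqref{eq:achc2} to \eqref{eq:achp}--\eqref{eq:achc}. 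Conversely, any $(\bm w,\bm r,\bm\gamma)$ feasible for \eqref{eq:Opt2} obeys $r_k^p\le\tau\log_2(1+\gamma_k^p)\le\tau\log_2(1+\Gamma_k^p)$ by \eqref{eq:sinrp} and monotonicity of $\log_2(1+\cdot)$, and, by the same token, $r_i^c\le\tau\log_2(1+\Gamma_{i,k}^c)$ for every $k$ with $i\in\mathcal{I}_k$, so $(\bm w,\bm r)$ is feasible for \eqref{eq:Opt1}. Equality of optimal values and of optimal $(\bm w,\bm r)$ then follows because $\Psi$ does not depend on $\bm\gamma$.

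For the stationarity statement, I would write out the KKT conditions of \eqref{eq:Opt2} at $(\bm w^\star,\bm r^\star,\bm\gamma^\star)$, with nonnegative multipliers $\lambda_k^p,\lambda_{i,k}^c$ on \eqref{eq:achp2}--\eqref{eq:achc2}, $\mu_k^p,\mu_{i,k}^c$ on \eqref{eq:sinrp}--\eqref{eq:sinrc}, and the usual multipliers on \eqref{eq:fthl}, \eqref{eq:pwr}. Stationarity in $\gamma_k^p$ yields $\mu_k^p=\lambda_k^p\,\tau/((1+\gamma_k^{p\star})\ln 2)$, and analogously for each $\gamma_{i,k}^c$; in particular $\lambda_k^p>0$ forces $\mu_k^p>0$ and hence, via complementary slackness on \eqref{eq:sinrp}, $\gamma_k^{p\star}=\Gamma_k^p(\bm w^\star)$. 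I would then define candidate multipliers for \eqref{eq:Opt1} by keeping those of \eqref{eq:fthl}, \eqref{eq:pwr} and assigning $\lambda_k^p$ to \eqref{eq:achp} and $\lambda_{i,k}^c$ to \eqref{eq:achc}. The $\bm r$-stationarity rows are literally identical, since $\bm\gamma$ does not appear in them and $r_k^p$ enters \eqref{eq:achp2} and \eqref{eq:achp} the same way. For the $\bm w$-stationarity rows, the only discrepancy is that \eqref{eq:Opt2} contributes $-\mu_k^p\,\nabla_{\bm w}\Gamma_k^p$ from \eqref{eq:sinrp} whereas \eqref{eq:Opt1} contributes $-\lambda_k^p\,\tau/((1+\Gamma_k^p)\ln 2)\,\nabla_{\bm w}\Gamma_k^p$ from differentiating \eqref{eq:achp}; these agree because, whenever the scalar prefactor is nonzero, $\gamma_k^{p\star}=\Gamma_k^p(\bm w^\star)$ makes $\tau/((1+\gamma_k^{p\star})\ln 2)=\tau/((1+\Gamma_k^p)\ln 2)$, and when it is zero both terms vanish; the common-message terms are handled identically. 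Primal feasibility, dual feasibility, and complementary slackness for \eqref{eq:Opt1} then follow from the corresponding conditions for \eqref{eq:Opt2} together with the reformulation argument above, which proves that $(\bm w^\star,\bm r^\star)$ is stationary for \eqref{eq:Opt1}.

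The routine parts are the monotonicity chaining in the first stage and the scalar derivative $\tfrac{d}{dx}\tau\log_2(1+x)=\tau/((1+x)\ln 2)$ in the second. The step that requires care is the multiplier bookkeeping: one must verify that the non-tight case $\gamma_k^{p\star}<\Gamma_k^p$ (equivalently $\mu_k^p=\lambda_k^p=0$) leaves no residual gradient term in either KKT system, and that no constraint qualification stronger than whatever is implicit in the notion of a "stationary solution" is needed when transferring the multipliers between the two problems.
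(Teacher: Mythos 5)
Your proof is correct, and its first stage is essentially the paper's entire argument: the paper's proof consists only of the chained inequality $r_k^p \le \tau\log_2(1+\gamma_k^p)\le \tau\log_2(1+\Gamma_k^p)$ together with the observation that choosing $\gamma_k^p=\Gamma_k^p$ (and the analogous choice for the common streams) makes \eqref{eq:achp2}--\eqref{eq:sinrc} collapse onto \eqref{eq:achp}--\eqref{eq:achc}. You do this more carefully by phrasing it as an equality of the projection of the feasible set of \eqref{eq:Opt2} onto $(\bm w,\bm r)$ with the feasible set of \eqref{eq:Opt1}, checking both directions explicitly. Your second stage, the KKT multiplier transfer, has no counterpart in the paper: the paper simply asserts the stationarity correspondence without argument, whereas you derive $\mu_k^p=\lambda_k^p\,\tau/((1+\gamma_k^{p\star})\ln 2)$ from $\gamma$-stationarity, use complementary slackness to force $\gamma_k^{p\star}=\Gamma_k^p(\bm w^\star)$ whenever the multiplier is active, and verify that the $\bm w$- and $\bm r$-stationarity rows of the two systems then agree. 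This buys a genuinely substantiated version of the lemma's final claim (which is what the convergence proof in Appendix~\ref{app2} actually relies on), at the cost of implicitly assuming a constraint qualification so that stationarity can be expressed through KKT multipliers --- a caveat you correctly flag. The only cosmetic mismatch is that the paper's proof writes the bandwidth as $B$ where the constraints use $\tau$; your version is the consistent one.
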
\vspace*{-0.15cm}
	\begin{proof}
	    Both problems \eqref{eq:Opt1} and \eqref{eq:Opt2} share the same objective and 
	    constraints \eqref{eq:achp2} and \eqref{eq:sinrp} can be formulated as \vspace{-.15cm}
	    \begin{equation}
	        r_{k}^{p} \leq B\,\log_2(1+\gamma_{k}^p) \leq B\,\log_2(1+\Gamma_{k}^p).\vspace{-.15cm}
	    \end{equation}
	    When engineering the newly introduced optimization variable $\gamma_k^p$ to fulfill $r_{k}^{p} = B\,\log_2(1+\gamma_{k}^p)$ and $\gamma_{k}^p = \Gamma_{k}^p$, we get \eqref{eq:achp}. A similar statement holds for the common messages, which completes the proof.
	\end{proof}
	In Lemma~\ref{lma_1}, the constraints \eqref{eq:achp2} and \eqref{eq:achc2} appear in convex form, however, \eqref{eq:sinrp} and \eqref{eq:sinrc} are still non-convex. Let us now tackle these constraints via fractional programming. 
	That is, after subtracting the right term from both sides, we utilize the quadratic transform (QT) in multidimensional and complex case proposed by \cite[Theorem 2]{8314727} to reformulate the fraction of signal to interference plus noise. Thereby, we obtain the following functions\vspace*{-0.15cm}
	\begin{align}
		&g^p(\bm{w},\bm{\gamma}) = \gamma_{k}^p - 2 \mathrm{Re}\left\{ \left({u}_{k}^p\right)^* \left(\bm{w}_{k}^p\right)^H \bm{h}_{k} \right\} \label{eq:qtp}\\
		&+ |u_{k}^p|^2 \left[ \sigma^2 + \sum\limits_{j \in \mathcal{K}\setminus \{k\}}\left|\bm{h}_{k}^{H}\bm{w}_{j}^p \right|^2 + \sum\limits_{l \in \mathcal{K}\setminus \mathcal{I}_k}\left|\bm{h}_{k}^{H}\bm{w}_{l}^c \right|^2 \right],\hspace{-0.1cm} \nonumber\\
		&g^c(\bm{w},\bm{\gamma}) = \gamma_{i,k}^c - 2 \mathrm{Re}\left\{ \left({u}_{i,k}^c\right)^* \left(\bm{w}_{i}^c\right)^H \bm{h}_{k} \right\} \label{eq:qtc}\\
		&+ |u_{i,k}^c|^2 \left[ \sigma^2 + \sum\limits_{j \in \mathcal{K}}\left|\bm{h}_{k}^{H}\bm{w}_{j}^p \right|^2 + \sum\limits_{l \in \mathcal{K}\setminus \mathcal{I}_k\cup \mathcal{I}'_{k,i}}\left|\bm{h}_{k}^{H}\bm{w}_{l}^c \right|^2 \right].\hspace{-0.1cm} \nonumber
	\end{align}
	Here, \eqref{eq:qtp} and \eqref{eq:qtc} are the QT associates of \eqref{eq:sinrp} and \eqref{eq:sinrc}, respectively, with ${u}_{k}^p$ and ${u}_{i,k}^c$ being auxiliary variables.
	\begin{remark}
	    For fixed ${u}_{k}^p$ and ${u}_{i,k}^c$, the second terms in \eqref{eq:qtp} and \eqref{eq:qtc} become linear functions of the precoders, also the latter terms become convex. Thus, \eqref{eq:qtp} and \eqref{eq:qtc} denote convex functions of the procoding vectors and the SINR variables, in case the auxiliary variables are fixed.
	\end{remark}
	To now obtain optimal auxiliary variables for fixed $\bm{w}$ and $\bm{\gamma}$, we consider Lemma~\ref{lem:u}. 
	\begin{lemma}\label{lem:u}
	    The optimal auxiliary variable results are
	\begin{align}
	    {u}_{k}^p = \frac{\left(\bm{w}_{k}^p\right)^H \bm{h}_{k}}{\sigma^2 + \sum\limits_{j \in \mathcal{K}\setminus \{k\}}\left|\bm{h}_{k}^{H}\bm{w}_{j}^p \right|^2 + \sum\limits_{l \in \mathcal{K}\setminus \mathcal{I}_k}\left|\bm{h}_{k}^{H}\bm{w}_{l}^c \right|^2},\label{eq:ukp}\\
	    {u}_{i,k}^c = \frac{\left(\bm{w}_{i}^c\right)^H \bm{h}_{k}}{\sigma^2 + \sum\limits_{j \in \mathcal{K}}\left|\bm{h}_{k}^{H}\bm{w}_{j}^p \right|^2 + \sum\limits_{l \in \mathcal{K}\setminus \mathcal{I}_k\cup \mathcal{I}'_{k,i}}\left|\bm{h}_{k}^{H}\bm{w}_{l}^c \right|^2}.\label{eq:ukc}
	\end{align}
	\end{lemma}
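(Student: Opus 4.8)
The plan is to read Lemma~\ref{lem:u} as an unconstrained minimization: for fixed $\bm{w}$ and $\bm{\gamma}$, the quadratic transform of \cite[Theorem 2]{8314727} guarantees that $g^p$ in \eqref{eq:qtp} is a global upper surrogate for $\gamma_k^p-\Gamma_k^p$ that is tight precisely at the value of $u_k^p$ minimizing $g^p$, and similarly $g^c$ in \eqref{eq:qtc} is tight for $\gamma_{i,k}^c-\Gamma_{i,k}^c$ at the minimizing $u_{i,k}^c$. Hence ``optimal'' here means the minimizer of the scalar complex quadratic $g^p$ (resp. $g^c$) in $u_k^p$ (resp. $u_{i,k}^c$), and the whole task reduces to minimizing a scalar quadratic.

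First I would isolate the $u_k^p$-dependence of \eqref{eq:qtp}. Let $c_k^p = \left(\bm{w}_k^p\right)^H\bm{h}_k$ and let $D_k^p = \sigma^2 + \sum_{j\in\mathcal{K}\setminus\{k\}}\left|\bm{h}_k^H\bm{w}_j^p\right|^2 + \sum_{l\in\mathcal{K}\setminus\mathcal{I}_k}\left|\bm{h}_k^H\bm{w}_l^c\right|^2$ denote the SINR denominator in \eqref{eq:e2.18}; note $D_k^p>0$ because the noise variance $\sigma^2>0$ already makes it positive. Then \eqref{eq:qtp} reads $g^p = \gamma_k^p - 2\,\mathrm{Re}\{(u_k^p)^* c_k^p\} + |u_k^p|^2 D_k^p$, with $\gamma_k^p$ constant in $u_k^p$. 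Since $D_k^p>0$, this is a strictly convex quadratic in the complex scalar $u_k^p$ and therefore has a unique global minimizer.

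Next I would locate that minimizer, either by completing the square, $g^p = D_k^p\bigl|u_k^p - c_k^p/D_k^p\bigr|^2 + \gamma_k^p - |c_k^p|^2/D_k^p$, which is minimized at $u_k^p = c_k^p/D_k^p$ and there evaluates to $\gamma_k^p-\Gamma_k^p$ (confirming the surrogate is tight), or equivalently by setting the Wirtinger derivative to zero, $\partial g^p/\partial (u_k^p)^* = D_k^p u_k^p - c_k^p = 0$. Either way $u_k^p = c_k^p/D_k^p = \left(\bm{w}_k^p\right)^H\bm{h}_k\big/D_k^p$, which is exactly \eqref{eq:ukp}. The identical computation applied to \eqref{eq:qtc}, with $c_{i,k}^c = (\bm{w}_i^c)^H\bm{h}_k$ and $D_{i,k}^c = \sigma^2 + \sum_{j\in\mathcal{K}}|\bm{h}_k^H\bm{w}_j^p|^2 + \sum_{l\in\mathcal{K}\setminus\mathcal{I}_k\cup\mathcal{I}'_{k,i}}|\bm{h}_k^H\bm{w}_l^c|^2 > 0$ the denominator of \eqref{eq:e2.19}, yields $u_{i,k}^c = c_{i,k}^c/D_{i,k}^c$, i.e. \eqref{eq:ukc}.

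I do not expect a genuine obstacle in this lemma: the statement is essentially the optimality condition built into the quadratic transform, so the proof is a one-line scalar computation. The only points requiring a little care are performing the complex differentiation correctly (or, equivalently, carrying out the completion of the square over $\mathbb{C}$) and noting that the quadratic coefficient $D_k^p$ (resp. $D_{i,k}^c$) is strictly positive, which is what makes the stationary point the unique global minimum rather than a saddle; everything else is bookkeeping that mirrors the structure of \cite[Theorem 2]{8314727}.
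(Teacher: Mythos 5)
Your proposal is correct and follows essentially the same route as the paper, which likewise obtains \eqref{eq:ukp} and \eqref{eq:ukc} by setting the partial derivatives of $g^p$ and $g^c$ with respect to the auxiliary variables to zero and solving. Your additional observations — that the quadratic coefficient (the SINR denominator) is strictly positive so the stationary point is the unique global minimizer, and that the minimum value recovers $\gamma_k^p-\Gamma_k^p$, confirming tightness of the quadratic-transform surrogate — are correct refinements that the paper leaves implicit.
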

	\begin{proof}
	    In \eqref{eq:qtp} and \eqref{eq:qtc}, the partial derivatives of $g^p(\bm{w},\bm{\gamma})$ and $g^c(\bm{w},\bm{\gamma})$ with respect to ${u}_{k}^p$ and ${u}_{i,k}^c$ are set to zero and solved, respectively. From this follow the optimal auxiliary variable results in \eqref{eq:ukp} and \eqref{eq:ukc}.
	\end{proof}
	As the optimal auxiliary variables depend on the selected precoders, the procedure requires an iterative solution. 
	At each iteration, the optimization problem becomes
	\begin{subequations}\label{eq:Opt3}
		\begin{align}
			\underset{\bm{w},\bm{r},{\bm \gamma}}{\mathrm{min}}\quad &\Psi  \tag{\ref{eq:Opt3}} \\
			\mathrm{s.t.} \quad &\eqref{eq:fthl},\ \eqref{eq:pwr},\ \eqref{eq:achp2},\ \eqref{eq:achc2}, \nonumber\\
			&g^p(\bm{w},\bm{\gamma}) \leq 0,  &\forall k&\in \mathcal{K}, \label{eq:gpqt}\\
			&g^c(\bm{w},\bm{\gamma}) \leq 0, &\forall i\in\mathcal{M}_k, \forall k&\in \mathcal{K}. \label{eq:gcqt}
		\end{align}
	\end{subequations} 
	Here the objective function \eqref{eq:Opt3} and the feasible set defined by all constraints are convex, problem \eqref{eq:Opt3} is a convex optimization problem that can efficiently solved by established solvers, such as CVX \cite{cvx}.
	
	To solve problem \eqref{eq:Opt3} efficiently, we propose detailed steps for precoder design and rate allocation in Algorithm \ref{alg}. At the beginning, we need to initiate the predoding vectors with feasible values. This can be done by \emph{random} initialization or by computing the maximum ratio transmitters. Then, we fix $\mathcal{M}_k$, $\mathcal{I}_k$, $\mathcal{I}'_{i,k}$, and the decoding order $\pi_k$, as well as the clustering sets for private and common streams $\mathcal{K}_b^p$ and $\mathcal{K}_b^p$. 
	Afterwards, the iterative algorithm repeats the following two steps until convergence: $(a)$ First, the auxiliary variables are updated according to \eqref{eq:ukp} and \eqref{eq:ukc}; $(b)$ Secondly, problem \eqref{eq:Opt3} is solved using CVX.
	\begin{subfigures}
    \begin{figure}
    \vspace{-.3cm}
    \end{figure}
	\begin{algorithm}[t]
    \caption{Resource Management for MSE and EE under RSMA}
    \begin{algorithmic}[1]
	\STATE Initialize feasible precoders $\bm{w}$\vspace*{-.05cm}
    \STATE Determine RSMA-related sets $\mathcal{M}_k$, $\mathcal{I}_k$, $\mathcal{I}'_{i,k}$, $\pi_k$\vspace*{-.05cm}
    \STATE Determine the clustering $\mathcal{K}_b^p$, $\mathcal{K}_b^p$\\\vspace*{-.05cm}
	\REPEAT
	\STATE Update the ${u}_{k}^p$ and ${u}_{i,k}^c$ using \eqref{eq:ukp} and \eqref{eq:ukc}\vspace*{-.05cm}
	\STATE Solve the convex problem \eqref{eq:Opt3} 
	\UNTIL{convergence}
    \end{algorithmic}
    \label{alg}
    \end{algorithm}%
    \end{subfigures}
    
    \begin{lemma}
        The iterative procedure in Algorithm \ref{alg} yields a stationary solution $(\bm{w}^\star,\bm{r}^\star)$ to problem \eqref{eq:Opt1}.
    \end{lemma}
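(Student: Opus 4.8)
The plan is to combine the standard monotone-convergence argument for quadratic-transform iterations with a limiting KKT argument that pushes stationarity back through Lemma~\ref{lma_1}. Write $(\bm{w}^{(t)},\bm{r}^{(t)},\bm{\gamma}^{(t)})$ for the iterate produced by the $t$-th solve of \eqref{eq:Opt3} in step~$(b)$, and $\bm{u}^{(t+1)}$ for the auxiliary variables subsequently produced by \eqref{eq:ukp}--\eqref{eq:ukc} in step~$(a)$.

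First I would establish monotonicity. The key structural fact, inherited from \cite[Theorem 2]{8314727}, is that $g^p(\bm{w},\bm{\gamma})\ge\gamma_k^p-\Gamma_k^p$ and $g^c(\bm{w},\bm{\gamma})\ge\gamma_{i,k}^c-\Gamma_{i,k}^c$ for \emph{every} value of the auxiliary variables, with equality precisely at the closed forms of Lemma~\ref{lem:u}. Hence any point feasible for \eqref{eq:Opt3} is feasible for \eqref{eq:sinrp}--\eqref{eq:sinrc}; in particular $(\bm{w}^{(t)},\bm{r}^{(t)},\bm{\gamma}^{(t)})$ satisfies $\gamma_k^{p,(t)}\le\Gamma_k^p(\bm{w}^{(t)})$, so re-evaluating $g^p,g^c$ with $\bm{u}^{(t+1)}$ makes them equal to $\gamma_k^{p,(t)}-\Gamma_k^p(\bm{w}^{(t)})\le 0$ and $\gamma_{i,k}^{c,(t)}-\Gamma_{i,k}^c(\bm{w}^{(t)})\le 0$. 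The previous iterate therefore remains feasible for the next instance of \eqref{eq:Opt3}, and since \eqref{eq:fthl}, \eqref{eq:pwr}, \eqref{eq:achp2}, \eqref{eq:achc2} do not involve $\bm{u}$, solving \eqref{eq:Opt3} to optimality in step~$(b)$ cannot increase $\Psi$. As $\Psi\ge 0$, the objective sequence converges.

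Next I would secure a limit point: the precoders are bounded via \eqref{eq:pwr}, the rates via \eqref{eq:fthl} and \eqref{eq:achp2}--\eqref{eq:achc2}, the surrogate SINRs $\bm{\gamma}$ by $\Gamma_k^p$ and $\Gamma_{i,k}^c$ (finite because the denominators in \eqref{eq:e2.18}--\eqref{eq:e2.19} are at least $\sigma^2>0$), and $\bm{u}^{(t)}$ is a continuous image of $\bm{w}^{(t-1)}$ under \eqref{eq:ukp}--\eqref{eq:ukc}; so the full iterate sequence lies in a compact set and admits a convergent subsequence, say to $(\bm{w}^\star,\bm{r}^\star,\bm{\gamma}^\star,\bm{u}^\star)$ with $\bm{u}^\star$ the Lemma~\ref{lem:u} map evaluated at $\bm{w}^\star$. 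Then I would write the KKT conditions of the convex program \eqref{eq:Opt3} at each $(\bm{w}^{(t+1)},\bm{r}^{(t+1)},\bm{\gamma}^{(t+1)})$ (valid under a Slater-type constraint qualification), pass to the limit along the subsequence using boundedness of the multipliers and continuity of all constraint gradients, and finally exploit the tightness of the quadratic transform at $\bm{u}^\star$: there $g^p$ and $g^c$ agree with $\gamma_k^p-\Gamma_k^p$ and $\gamma_{i,k}^c-\Gamma_{i,k}^c$ both in value and in first-order behaviour, so the limiting KKT system coincides with that of \eqref{eq:Opt2}. Lemma~\ref{lma_1} then yields that $(\bm{w}^\star,\bm{r}^\star)$ is stationary for \eqref{eq:Opt1}.

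I expect the main obstacle to be this last transfer step: carefully verifying that the first-order optimality of the auxiliary update forces $\nabla g^p$ and $\nabla g^c$ to match the gradients of the true SINR constraints \eqref{eq:sinrp}--\eqref{eq:sinrc} at the limit point (and that the associated multipliers converge), so that the KKT conditions genuinely carry over. The monotonicity and compactness parts are routine, and the only auxiliary point that needs attention is the constraint qualification guaranteeing that KKT is necessary for \eqref{eq:Opt3} and for \eqref{eq:Opt1}, which I would argue holds whenever the power and fronthaul budgets are not degenerate.
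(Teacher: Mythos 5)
Your proposal is correct and takes essentially the same route as the paper: the paper's proof simply delegates the monotonicity, tightness-at-the-optimal-auxiliary-variable, and stationarity-transfer machinery to the ``equivalent objective'' and ``equivalent solution'' conditions of \cite[Theorem 3]{8314727} (viewing Algorithm~\ref{alg} as block coordinate descent) and then invokes Lemma~\ref{lma_1}, whereas you spell those steps out explicitly via monotone convergence, compactness, and a limiting KKT argument.
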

    \begin{proof}
        Please refer to Appendix \ref{app2}.
    \end{proof}
    
    We note that the complexity of Algorithm \ref{alg} depends on the convergence rate and the complexity of problem \eqref{eq:Opt3}. Due to the form of objective and variables, the problem can be cast as a second order cone program (SOCP), which is solvable using interior-point methods. The overall upper-bound computational complexity of Algorithm \ref{alg} is $\mathcal{O}({V}_{\text{max}}(d_1)^{3.5})$, where $d_1={K}(2({B}{L}+1)+{K}+1)$ is the total number of variables and ${V}_{\text{max}}$ is the number of iterations until convergence in the worst-case.

    \section{Numerical Simulation}\label{sec:sim}
	To evaluate the performance of the proposed methods, we conduct numerical simulations in this section. Therefore, we consider a network over a square area of $\SI{800}{\meter}$ by $\SI{800}{\meter}$, in which BSs and users are placed randomly. Each BS is equipped with $L=2$ antennas and has a maximum transmit power of $P_n^\mathrm{max} = \SI{28}{dBm}$. Unless mentioned otherwise, we fix $\alpha = 0.5$, $C_n^\mathrm{max}=\SI{28}{Mbps}$, the number of BS $N=10$, and the number of uses $K=16$. 
	We consider a channel bandwidth of $B = \SI{10}{MHz}$. In the considered channel model, the path-loss is modeled by $\mathrm{PL}_{n,k}= 128.1 + 37.6\cdot\mathrm{log}_{10}(d_{n,k})$, where the distance of user $k$ and BS $n$ is denoted as $d_{n,k}$. Additionally, we consider log-normal shadowing with $\SI{8}{dB}$ standard deviation and Rayleigh fading with zero mean and unit variance. The noise power spectral density is $\SI{168}{dBm/Hz}$. Unless mentioned otherwise, we set the mixed-critical QoS demands to $\SI{14}{Mbps}$, $\SI{7}{Mbps}$, and $\SI{3}{Mbps}$, for $4$, $6$, and $6$, random users, respectively. Thereby, we employ three criticality levels, namely high (HI), medium (ME), and low (LO).
	
	In this work, we have proposed a specific method, which we refer to as RSMA. Additionally, we consider $2$ different reference schemes, treating interference as noise (TIN), and a single common message-based (SCM-RSMA) scheme. 
	TIN does not consider any rate-splitting capabilities. In contrast, SCM-RSMA employs a one-layer rate-splitting, where one \emph{super common} stream is decoded by all users in addition to private messages, e.g., used in \cite{8846706,7513415}. 
	\subsection{Impact of Interference Management Schemes}
	\begin{figure}[t]
	\centering
	\begin{subfigure}[t]{0.24\textwidth}
		\begin{center}
			\includegraphics[scale=0.59]{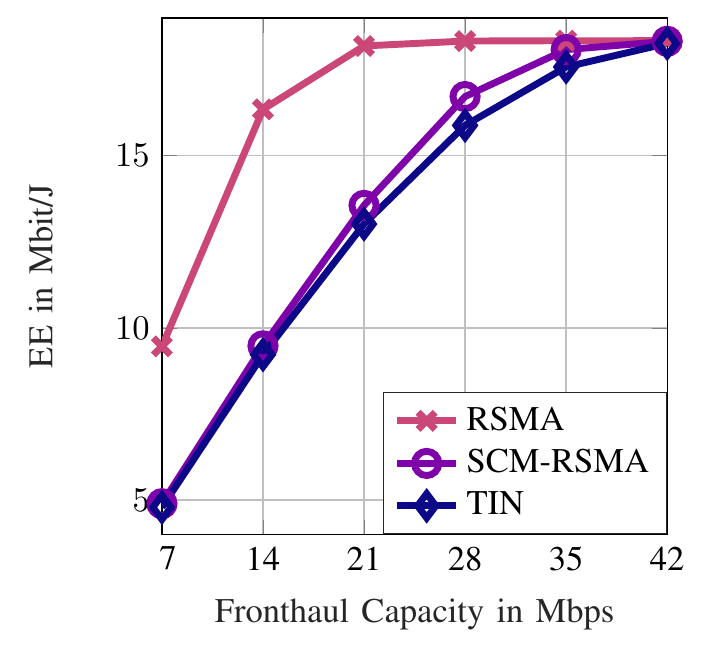}\vspace{-.2cm}
			\caption{EE vs. ${C}_b^\mathrm{max}$.}
			\label{xEE_yFthl_v3}
		\end{center}
	\end{subfigure}\hfill
	\begin{subfigure}[t]{0.24\textwidth}
		\centering
		\includegraphics[scale=0.59]{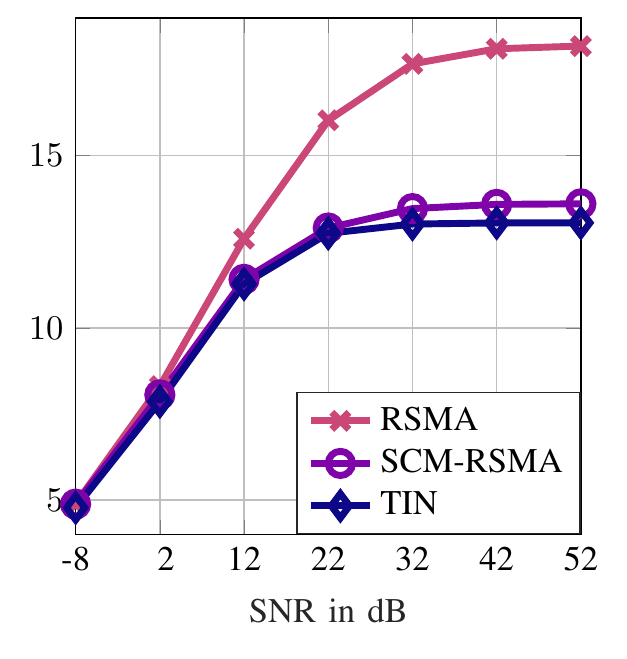}\vspace{-.2cm}
		\caption{EE vs. SNR.}
		\label{yEE_xSNR_v1}
	\end{subfigure}
	\caption{EE as a function of fronthaul capacity and SNR for the considered schemes.} \label{bla2}
    \end{figure}
	In the first set of simulations, we compare the impact of different interference management schemes, i.e., RSMA, SCM-RSMA, and TIN, on the EE of the system. 	As the EE is of special interest for future 6G networks, we herein consider such metric as follows 
    \begin{align}
	    &\Phi = \frac{\sum_{k\in\mathcal{K}} r_k}{\sum_{k\in\mathcal{K}} \left(\norm[\big]{\bm{w}_k^p}_2^2 + \norm[\big]{\bm{w}_k^c}_2^2\right) + P^{\mathrm{cir}}}.\label{eq:phi}
	\end{align}
	Here $P^{\mathrm{cir}}$ is a fixed power value associated to the operational power consumption of the C-RAN, e.g., circuitry power or cooling resources. Throughout this simulation, we set $P^{\mathrm{cir}} = \SI{38}{dBm}$. Equation \eqref{eq:phi} is a fraction of sum rate over the total network's power consumption.
	Note that such mathematical expression is widely used throughout the literature, e.g., \cite{8846706,9145076}.
	By varying the fronthaul capacity $C_n^\mathrm{max}$, the EE of different schemes is shown in Fig. \ref{xEE_yFthl_v3}.
	As $C_n^\mathrm{max}$ is low, we observe a significant EE gain from using RSMA over TIN and SCM-RSMA. We note that the one-layer based approach, i.e., SCM-RSMA, also reveals EE improvements over TIN, which highlights the advantages of \emph{rate-splitting}. In Fig. \ref{xEE_yFthl_v3}, the schemes converge to the same EE value at $C_n^\mathrm{max} = \SI{42}{Mbps}$. At this point, all schemes are able to fulfill the QoS demand in an efficient manner with low transmit powers, there is no gain from utilizing RSMA. However, as the EE of RSMA already saturates at $21$ Mbps, the proposed scheme is able to reach such point much faster than the reference schemes. In more details, Algorithm \ref{alg} determines a reasonable rate gap and transmit power trade-off, to achieve a good EE.
	
	Similarly, Fig. \ref{yEE_xSNR_v1} shows the EE as a function of the signal to noise ratio (SNR) in dB for $C_n^\mathrm{max} = \SI{21}{Mbps}$. In the high SNR regime, where interference becomes the major bottleneck in C-RAN, RSMA shows superior EE performance. This emphasizes the need for sophisticated interference management techniques when SNR $> \SI{2}{dB}$.
	
	In summary, these results emphasize the benefits of the RSMA paradigm in terms of EE enhancements, especially in high SNR range and limited fronthaul regime, where fulfilling all QoS demands is difficult.

	\subsection{Impact of QoS Targets}

	\begin{figure}[t]
	\centering
	\begin{subfigure}[t]{0.24\textwidth}
		\begin{center}
			\includegraphics[scale=0.59]{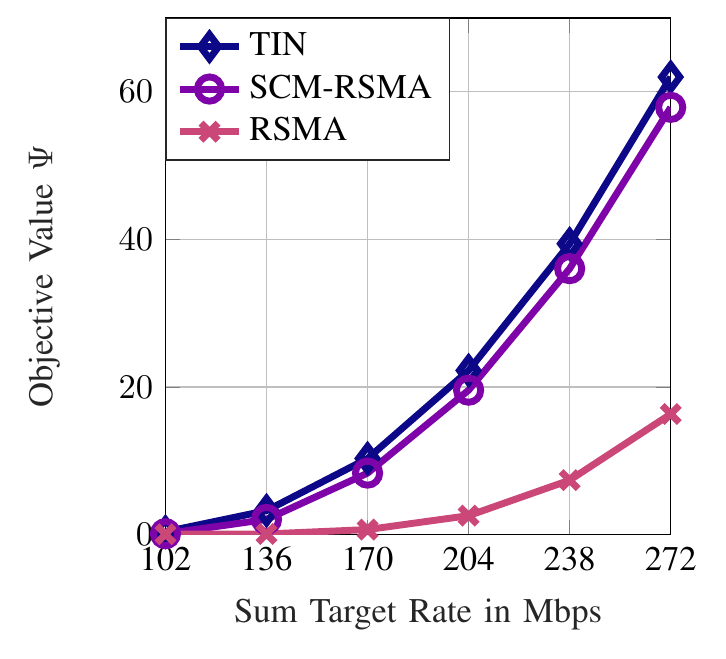}\vspace{-.2cm}
			\caption{$\Psi$ vs. sum target rate.}
			\label{yEE_xQOS_v2}
		\end{center}
	\end{subfigure}\hfill
	\begin{subfigure}[t]{0.24\textwidth}
		\centering
		\includegraphics[scale=0.59]{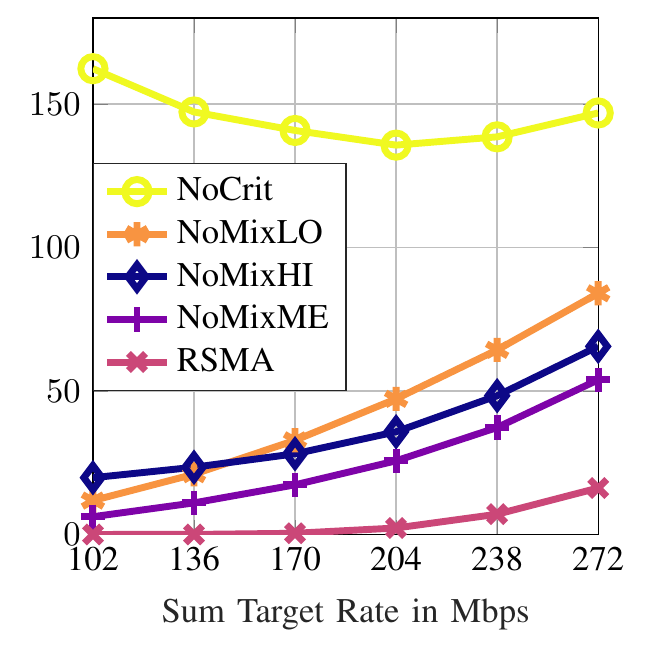}\vspace{-.2cm}
		\caption{$\Psi$ vs. sum target rate.}
		\label{yOBJ_xQOS_mix_v1}
	\end{subfigure}
	\caption{Objective function over sum target rate comparing various schemes.} \label{bla2}
    \end{figure}
    Since the exact values of QoS targets have significant impact on the system performance, we herein assess the objective value as a function of the sum target rate, i.e., $\sum_{k\in\mathcal{K}}r_k^\mathrm{des}$. Note that hereby we respect the criticality levels, i.e., we start with $12$, $6$, $\SI{3}{Mbps}$ for HI, ME, and LO levels, respectively. For each point on the x-axis, we increment the LO level by $\SI{1}{Mbps}$ keeping the ME (HI) level at double (quadruple) the LO level. We again compare RSMA, SCM-RSMA, and TIN, and observe the algorithms behavior in Fig. \ref{yEE_xQOS_v2}. In \eqref{eq:f}, the objective value $\Psi$ contains a summation of rate gap and total transmit power. Therefore, $\Psi$ increases jointly with higher sum target rates for all schemes. While TIN and SCM-RSMA experience a strong almost exponential incline, the proposed RSMA scheme keeps $\Psi$ much lower for increasing QoS demands. Especially at $\SI{272}{Mbps}$, the references achieve $\Psi=60$, while the RSMA objective is $18$.
    
    To show the impact of mixed criticality, in Fig. \ref{yOBJ_xQOS_mix_v1}, we show $\Psi$ as a function of the sum target rate. We compare the proposed method with three schemes ignoring the different criticality levels. Thereby, all QoS demands set to the same value, either the LO, ME, or HI level, which is denoted by NoMixLO, NoMixME, NoMixHI, respectively. Additionally, we consider a scheme discarding the QoS demands and thus the mixed criticality aspect as NoCrit. While the proposed scheme achieves best $\Psi$, NoCrit is not able to support the network needs, as is results in high values for $\Psi$, which constitutes bad EE and rate gap performance. NoMixLO, NoMixME, and NoMixHI exhibit lower $\Psi$ values, whereas NoMixME performs best. This is reasonable since NoMixME is a plausible compromise between setting all criticalities to LO or HI. However, these schemes do not beat the RSMA performance. Thus, these results show the importance of not only considering the criticality levels of the network, but also the mixed criticality aspect.
    
    As $\Psi$ captures a trade-off between minimizing the rate gap and the transmit power, lower values correspond to higher EE values, which was also verified by previous results. Thereby these simulations highlight the explicit gain of mixed criticality-enabled RSMA paradigm in terms of minimizing rate gap and transmit power, as well as achieving superior EE values in networks with high QoS demand, i.e., 6G networks.
    
    \subsection{Rate MSE vs. Transmit Power Minimization}\vspace*{-0.1cm}
    \begin{figure}[t]
	\centering
	\includegraphics[scale=0.66]{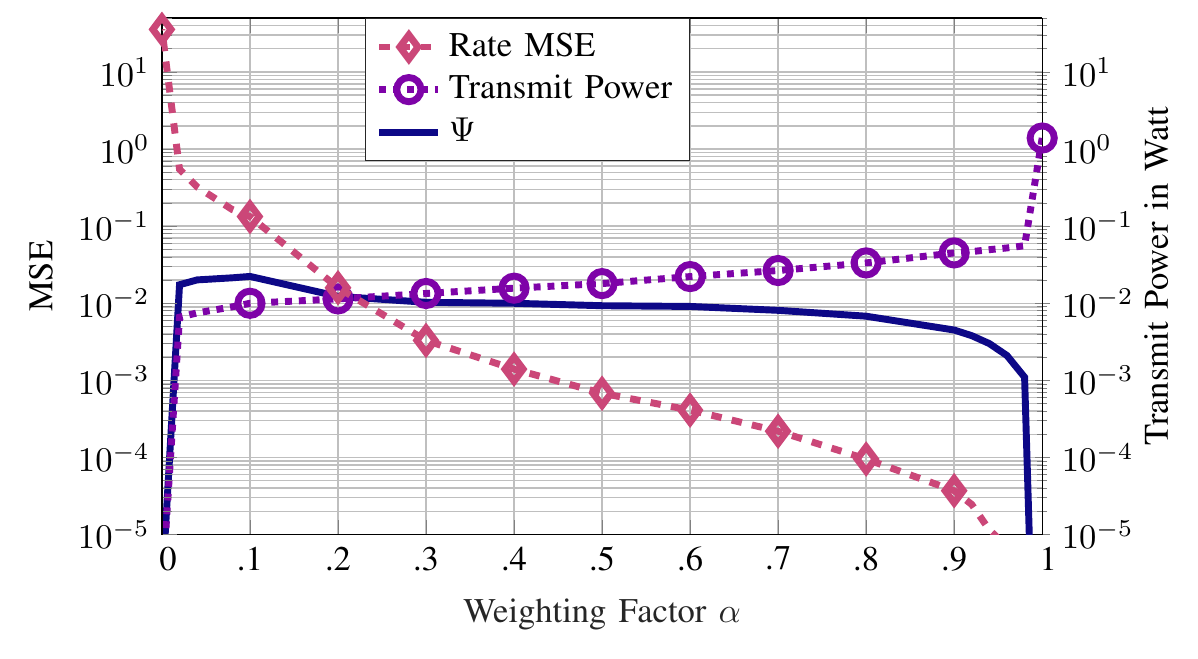}\vspace{-.2cm}
	\caption{Rate MSE and transmit power over $\alpha$.} \label{yMSE_yPWR_xAlph_v4}
    \end{figure}
	The manifold impacts of varying the parameter $\alpha$ in \eqref{eq:f} are now analyzed. Fig. \ref{yMSE_yPWR_xAlph_v4} shows the rate gap and transmit power as functions of $\alpha$. Additionally, we plot $\Psi$, which is the addition of both graphs with corresponding weights. 
	As expected, the numerical results show that there is a trade-off between optimizing the rate gap and the transmit power. This behavior comes from \eqref{eq:f}, as a large $\alpha$ will accentuate MSE minimization, while small $\alpha$ emphasizes transmit power minimization. In a wide-range of $\alpha \in \{0.3,\ldots,0.9\}$, the algorithm is able to achieve significant MSE improvements with only minor transmit power increase. As such, Fig. \ref{yMSE_yPWR_xAlph_v4} highlights the strong trade-off capabilities of the proposed scheme, which make is applicable in a wide-range of networks.
	
	In this context, the average number of iterations until convergence were $3$ when $\alpha=0.5$. Thereby, we validate the numerical merits of the proposed algorithm, especially regarding convergence and thus computation time. Such result is important in the context of latency sensitive applications in green 6G networks.\vspace*{-0.17cm}
	\section{Conclusion}\label{sec:con}\vspace*{-.1cm}
	Future 6G networks are envisioned to fulfill unprecedented QoS demands while at the same time minimize the power usage. As such, this paper addresses the problem of finding a reasonable trade-off between fulfilling the mixed-critical network-wide QoS target rates while also providing enhanced EE performance. Utilizing the benefits of C-RAN, the RSMA paradigm, and recent advances in fractional programming, this paper solves a non-convex rate gap and power minimization problem subject to various network constraints. An efficient iterative algorithm relying on QT is proposed, which is shown to provide a stationary solution to the original problem. The numerical results verify the EE benefits of RSMA over reference schemes such as SCM-RSMA and TIN in regards to fronthaul capacity, SNR, and QoS demands. Additionally, results emphasize the need for considering the aspects of mixed criticality. 
	The proposed scheme utilizes the benefits of RSMA and provides enhanced EE performance under excellent convergence behavior and is well suited for the future carbon neutrality and QoS demands for 6G.\vspace*{-0.17cm}
	
	
    \appendices
    \section{}\label{app2}\vspace*{-.1cm}
    With given Lemma \ref{lma_1}, the proof remains to show that the iterative procedure yields $(\bm{w}^\star,\bm{r}^\star,\bm{\gamma}^\star)$, a stationary solution to problem \eqref{eq:Opt2}. The steps in this proof are similar to the proof of \cite[Theorem 3]{8314727} and rely on the therein provided conditions of equivalent solution and equivalent objective. Algorithm \ref{alg} is a block coordinate descent algorithm for problem \eqref{eq:Opt3}, which is a convex problem. Therefore, such algorithm is guaranteed to converge to a stationary solution of problem \eqref{eq:Opt3}. Due to the definition of the auxiliary variables, equations \eqref{eq:qtp} and \eqref{eq:qtc} yield \eqref{eq:sinrp} and \eqref{eq:sinrc}, respectively, if and only if $({u}_{k}^p)^\star$ and $({u}_{i,k}^c)^\star$ are the optimal auxiliary variables of the stationary solution to \eqref{eq:Opt3}. As problems \eqref{eq:Opt2} and \eqref{eq:Opt3} share the same objective, and constraints \eqref{eq:qtp} and \eqref{eq:qtc} yield \eqref{eq:sinrp} and \eqref{eq:sinrc}, respectively, Algorithm \ref{alg} also converges to a stationary solution to problem \eqref{eq:Opt2}. This completes the proof.\vspace*{-0.2cm}
    \bibliographystyle{IEEEtran}
	\bibliography{bibliography}

\begin{thebibliography}{10}
\providecommand{\url}[1]{#1}
\csname url@samestyle\endcsname
\providecommand{\newblock}{\relax}
\providecommand{\bibinfo}[2]{#2}
\providecommand{\BIBentrySTDinterwordspacing}{\spaceskip=0pt\relax}
\providecommand{\BIBentryALTinterwordstretchfactor}{4}
\providecommand{\BIBentryALTinterwordspacing}{\spaceskip=\fontdimen2\font plus
\BIBentryALTinterwordstretchfactor\fontdimen3\font minus
  \fontdimen4\font\relax}
\providecommand{\BIBforeignlanguage}[2]{{%
\expandafter\ifx\csname l@#1\endcsname\relax
\typeout{** WARNING: IEEEtran.bst: No hyphenation pattern has been}%
\typeout{** loaded for the language `#1'. Using the pattern for}%
\typeout{** the default language instead.}%
\else
\language=\csname l@#1\endcsname
\fi
#2}}
\providecommand{\BIBdecl}{\relax}
\BIBdecl

\bibitem{8922617}
T.~Huang, W.~Yang, J.~Wu, J.~Ma, X.~Zhang, and D.~Zhang, ``A survey on green
  {6G} network: Architecture and technologies,'' \emph{IEEE Access}, vol.~7,
  pp. 175\,758--175\,768, 2019.

\bibitem{7378422}
D.~Pompili, A.~Hajisami, and T.~X. Tran, ``Elastic resource utilization
  framework for high capacity and energy efficiency in cloud {RAN},''
  \emph{IEEE Commun. Mag.}, vol.~54, no.~1, pp. 26--32, 2016.

\bibitem{sota17}
S.~Vestal, ``Preemptive scheduling of multi-criticality systems with varying
  degrees of execution time assurance,'' in \emph{28th IEEE Int. RTSS}, 2007,
  pp. 239--243.

\bibitem{sota18}
A.~Burns, J.~Harbin, L.~Indrusiak, I.~Bate, R.~Davis, and D.~Griffin,
  ``Airtight: A resilient wireless communication protocol for mixed-criticality
  systems,'' in \emph{IEEE 24th Int. Conf. Embedded RTCSA}, 2018, pp. 65--75.

\bibitem{sota21}
C.~Xia, X.~Jin, L.~Kong, and P.~Zeng, ``Bounding the demand of
  mixed-criticality industrial wireless sensor networks,'' \emph{IEEE Access},
  vol.~5, pp. 7505--7516, 2017.

\bibitem{1056307}
T.~Han and K.~Kobayashi, ``A new achievable rate region for the interference
  channel,'' \emph{IEEE Trans. Inf. Theory}, vol.~27, no.~1, pp. 49--60, 1981.

\bibitem{4675741}
R.~H. Etkin, D.~N.~C. Tse, and H.~Wang, ``Gaussian interference channel
  capacity to within one bit,'' \emph{IEEE Trans. Inf. Theory}, vol.~54,
  no.~12, pp. 5534--5562, 2008.

\bibitem{5910112}
H.~Dahrouj and W.~Yu, ``Multicell interference mitigation with joint
  beamforming and common message decoding,'' \emph{IEEE Trans. Commun.},
  vol.~59, no.~8, pp. 2264--2273, 2011.

\bibitem{8846706}
Y.~Mao, B.~Clerckx, and V.~O.~K. Li, ``Rate-splitting for multi-antenna
  non-orthogonal unicast and multicast transmission: Spectral and energy
  efficiency analysis,'' \emph{IEEE Trans. Commun.}, vol.~67, no.~12, pp.
  8754--8770, 2019.

\bibitem{mao2022ratesplitting}
\BIBentryALTinterwordspacing
Y.~Mao, O.~Dizdar, B.~Clerckx, R.~Schober, P.~Popovski, and H.~V. Poor,
  ``Rate-splitting multiple access: Fundamentals, survey, and future research
  trends,'' 2022. [Online]. Available: \url{https://arxiv.org/abs/2201.03192}
\BIBentrySTDinterwordspacing

\bibitem{8491100}
Y.~Mao, B.~Clerckx, and V.~O. Li, ``Energy efficiency of rate-splitting
  multiple access, and performance benefits over sdma and noma,'' in \emph{15th
  ISWCS}, 2018, pp. 1--5.

\bibitem{9650662}
G.~Zhou, Y.~Mao, and B.~Clerckx, ``Rate-splitting multiple access for
  multi-antenna downlink communication systems: Spectral and energy efficiency
  tradeoff,'' \emph{IEEE Trans. Wirel. Commun.}, pp. 1--1, 2021.

\bibitem{9145076}
A.~A. Ahmad, B.~Matthiesen, A.~Sezgin, and E.~Jorswieck, ``Energy efficiency in
  {C-RAN} using rate splitting and common message decoding,'' in \emph{IEEE ICC
  Workshops}, 2020, pp. 1--6.

\bibitem{9145363}
A.~A. Ahmad, H.~Dahrouj, A.~Chaaban, A.~Sezgin, T.~Y. Al-Naffouri, and M.-S.
  Alouini, ``Power minimization via rate splitting in downlink cloud-radio
  access networks,'' in \emph{IEEE ICC Workshops}, 2020, pp. 1--6.

\bibitem{9445019}
A.~A. Ahmad, Y.~Mao, A.~Sezgin, and B.~Clerckx, ``Rate splitting multiple
  access in {C-RAN}: A scalable and robust design,'' \emph{IEEE Trans.
  Commun.}, vol.~69, no.~9, pp. 5727--5743, 2021.

\bibitem{9217249}
------, ``Rate splitting multiple access in {C-RAN},'' in \emph{IEEE 31st Int.
  Symp. PIMRC}, 2020, pp. 1--6.

\bibitem{8314727}
K.~Shen and W.~Yu, ``Fractional programming for communication systems—part
  {I}: Power control and beamforming,'' \emph{IEEE Trans. Signal Process.},
  vol.~66, no.~10, pp. 2616--2630, 2018.

\bibitem{cvx}
\BIBentryALTinterwordspacing
M.~Grant and S.~Boyd, ``{CVX}: Matlab software for disciplined convex
  programming, version 2.1,'' 2014. [Online]. Available:
  \url{http://cvxr.com/cvx}
\BIBentrySTDinterwordspacing

\bibitem{7513415}
H.~Joudeh and B.~Clerckx, ``Robust transmission in downlink multiuser miso
  systems: A rate-splitting approach,'' \emph{IEEE Trans. Signal Process.},
  vol.~64, no.~23, pp. 6227--6242, 2016.

\end{thebibliography}
\end{document}